\newif\ifFull
\newcommand{\niceremark}[3]{\textcolor{red}{\textsc{#1 #2: }}\textcolor{blue}{\textsf{#3}}}
\newcommand{\martin}[2][says]{\niceremark{Martin}{#1}{#2}}
\renewcommand{\niceremark}[3]{}
\renewcommand{\paragraph}[1]{\medskip\noindent\textbf{#1.}}
\newtheorem{theorem}{Theorem}
\newtheorem{corollary}{Corollary}
\newtheorem{definition}{Definition}
\newenvironment {proof}{\textbf {Proof:}}{\hfill \ensuremath {\boxtimes}}
\let\doendproof\endproof
\renewcommand\endproof{~\hfill$\qed$\doendproof}
\definecolor {infocolor} {rgb} {0.6,0.6,0.6}
\title{On Minimizing Crossings\\ in Storyline Visualizations}
\author{
  Irina Kostitsyna\footnote{Dept. of Mathematics and Computer Science, TU Eindhoven, The Netherlands.}
  \and
  Martin N\"ollenburg\footnote{Algorithms and Complexity Group, TU Wien, Vienna, Austria.}
  \and
  Valentin Polishchuk\footnote{Link\"oping University, Link\"oping, Sweden.}
  \and
  Andr\'e Schulz\footnote{LG Theoretische Informatik, FernUniversit\"at in Hagen, Germany.}
  \and
  Darren Strash\footnote{Institute of Theoretical Informatics, Karlsruhe Institute of Technology, Germany.}
 }
\institute{Dept. of Mathematics and Computer Science, TU Eindhoven, The Netherlands.
\and Algorithms and Complexity Group, TU Wien, Vienna, Austria.
\and Communications and Transport Systems, ITN, Link\"oping University, Sweden.
\and LG Theoretische Informatik, FernUniversit\"at in Hagen, Germany.
\and Institute of Theoretical Informatics, Karlsruhe Institute of Technology, Germany.}
\author {
    Irina Kostitsyna\inst{1}
  \and
  Martin N\"ollenburg\inst{2}
  \and
  Valentin Polishchuk\inst{3}
  \and
  Andr\'e Schulz\inst{4}
  \and
  Darren Strash\inst{5}
}
\begin{document}
\maketitle
\begin{abstract}
In a storyline visualization, we visualize a collection of interacting characters (e.g., in a movie, play, etc.) by $x$-monotone curves that converge for each interaction, and diverge otherwise. Given a storyline with $n$ characters, we show tight lower and upper bounds on the number of crossings required in any storyline visualization for a restricted case. In particular, we show that if (1) each meeting consists of exactly two characters and (2) the meetings can be modeled as a tree, then we can always find a storyline visualization with $O(n\log n)$ crossings. Furthermore, we show that there exist storylines in this restricted case that require $\Omega(n\log n)$ crossings. Lastly, we show that, in the general case, minimizing the number of crossings in a storyline visualization is fixed-parameter tractable, when parameterized on the number of characters $k$. 
Our algorithm runs in time $O(k!^2k\log k + k!^2m)$, where $m$ is the number of meetings.
\end{abstract}

\section {Introduction}
Ever since an xkcd comic\footnote{\url{http://xkcd.com/657}} featured storyline visualizations of various popular films, storyline visualizations have increasingly gained popularity as an area of research in the information visualization community (although the precursors of this kind of visualization may date back to Minard's 1861 visualization of Napoleon's Russian campaign of 1812). Informally, a storyline consists of characters (e.g., in a movie, play, etc.) who meet at certain times during a story. In a storyline visualization, each character is represented as an $x$-monotone curve. When characters meet (e.g., appear together in a scene, or interact), their representative curves should be grouped close together vertically, and otherwise their curves should be separate (see Fig.~\ref{figure:basic-storyline}, left). We assume that every character can only be in one meeting group at every point in time. One of the main goals for producing readable storyline visualizations is to minimize the number of crossings between character curves. Most previous results for constructing storyline visualizations are practical, implementing drawing routines that rely on heuristics or genetic algorithms~\cite{muelder-2013,tanahashi-2012}. However, there are only few theoretical results for storyline visualizations. Storyline visualization is tightly related to layered graph drawing~\cite{sugiyama-1981}, where layers correspond to meeting times in the storyline, and a permutation of all character curves needs to be computed for each time point.
Minimizing crossings in a storyline visualization is also related to bounding the ratio of (proper) crossings to touchings for families of monotone curves~\cite{pach-2015}.

\begin{figure}[t]
\centering
\includegraphics[width=.9\columnwidth]{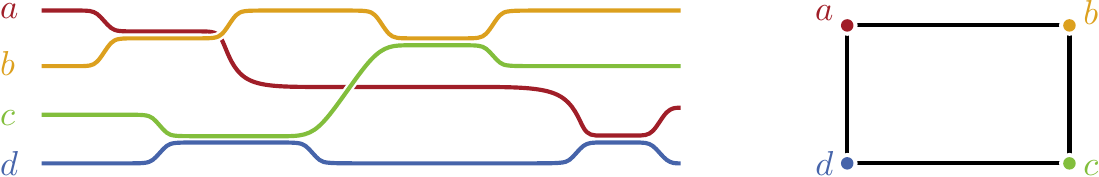}
\caption{Left: A storyline visualization with characters $a$, $b$, $c$, $d$. Right: The event graph.}
\label{figure:basic-storyline}
\end{figure}

\paragraph{Our Results}
While previous results focus on drawing storyline visualizations in practice using heuristics~\cite{muelder-2013,tanahashi-2012}, here we investigate the minimum number of crossings required in any storyline visualization. First, we investigate storyline visualizations in a restricted case. We show that if (1) each meeting consists of exactly two characters 
and (2) the meetings can be modeled as a tree,
then we can always find a storyline visualization with $O(n\log n)$ crossings, where $n$ is the number of characters.
Furthermore, we show that there exist storylines in this restricted case that require $\Omega(n\log n)$ crossings. Lastly, we show that, in the general case, minimizing the number of crossings in a storyline visualization is fixed-parameter tractable, when parameterized on the number of characters~$k$. Our algorithm runs in time $O(k!^2k\log k + k!^2m)$, where $m$ is the number of meetings.

\paragraph{Problem Formulation}
In the \emph{storyline problem}, we are given a storyline $S = (C, \mathcal{T}, \mathcal{E})$, that is defined by set of characters $C=\{1,\ldots,n\}$, that meet during closed time intervals $\mathcal{T} \subset \{[s,t]|s,t\in\mathbb{N}, s\leq t\}$. We call a meeting an \emph{event}, and denote the set of events as $\mathcal{E} \subset 2^{C} \times \mathcal{T}$, where each event $E_i = (C_i, [s_i,t_i])\in \mathcal{E}$ (with $1\leq i\leq m$) is defined by a subset $C_i \subseteq C$ of characters that meet for the entire time interval $[s_i, t_i] \in \mathcal{T}$ (naturally, a character cannot participate in two overlapping events). The goal then is to produce a 2D drawing of $S$, called a \emph{storyline visualization}, where the $x$-axis represents time, and characters are drawn as $x$-monotone curves 
placed in some vertical order for each point in time. During each event $E_i = (C_i, [s_i,t_i])$, curves representing characters in $C_i$ should be grouped within some small vertical distance $\delta_{\mathrm{group}}$ of each other, and otherwise the characters should be separated by some larger vertical distance ${\delta_{\mathrm{separate}}} > \delta_{\mathrm{group}}$. \martin{I would avoid the long indices and just say $\delta_\mathrm{g}$ and $\delta_\mathrm{sep}$}

\section{Pairwise Single-Meeting Storylines}
We focus on a simplified version of the storyline problem, where each event consists of exactly two characters, and these characters meet exactly once in $\mathcal{E}$.
\martin{reinserted single meeting. this is important here.}
For this simplified version, we can represent our events as a graph where every vertex is a character, and every edge is a meeting of the corresponding characters. We call this graph an \emph{event graph} (Fig.~\ref{figure:basic-storyline}, right).


\subsection{$O(n \log n)$ Crossings for Tree Event Graphs}
Let our event graph be a tree $T$ with $n$ nodes.
Then we show that we can always draw a storyline visualization with $O(n\log n)$ crossings. Our result relies on decomposing $T$ into disjoint subtrees that are drawn in disjoint axis-aligned rectangles. We reach this bound by using the \emph{heavy path decomposition} technique~\cite{sleator-tarjan-1983}.

\begin{definition}[heavy path decomposition~\cite{sleator-tarjan-1983}]
Let $T$ be a rooted tree. 
For each internal node $v$ in $T$, we choose a child $w$ with the largest subtree among all of $v$'s children. We call the edge $(v,w)$ a \emph{heavy edge}, and the edges to $v$'s other children \emph{light edges}. 
We call a maximal path of heavy edges a \emph{heavy path}, and the decomposition of $T$ into heavy paths and light edges a heavy path decomposition.
\end{definition}

\begin{figure}[!tb]
\centering
\includegraphics[width=0.47\textwidth]{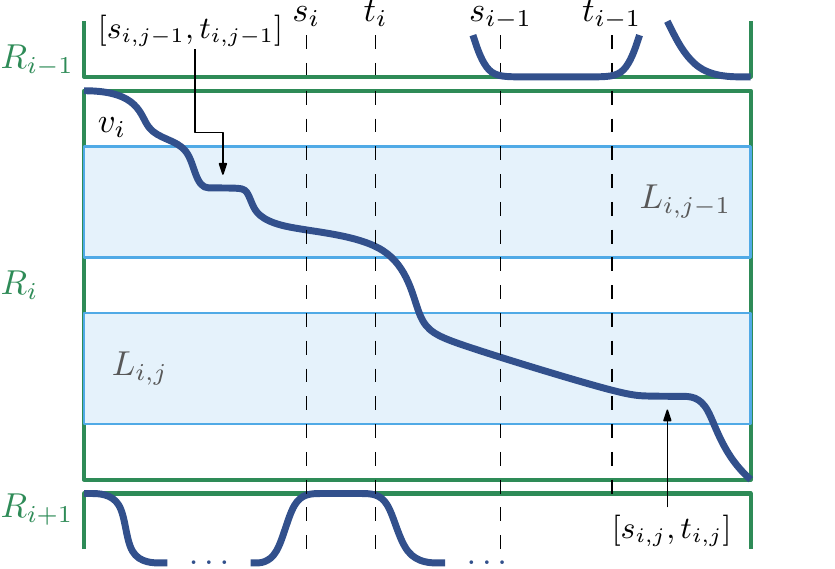}
\includegraphics[width=0.47\textwidth]{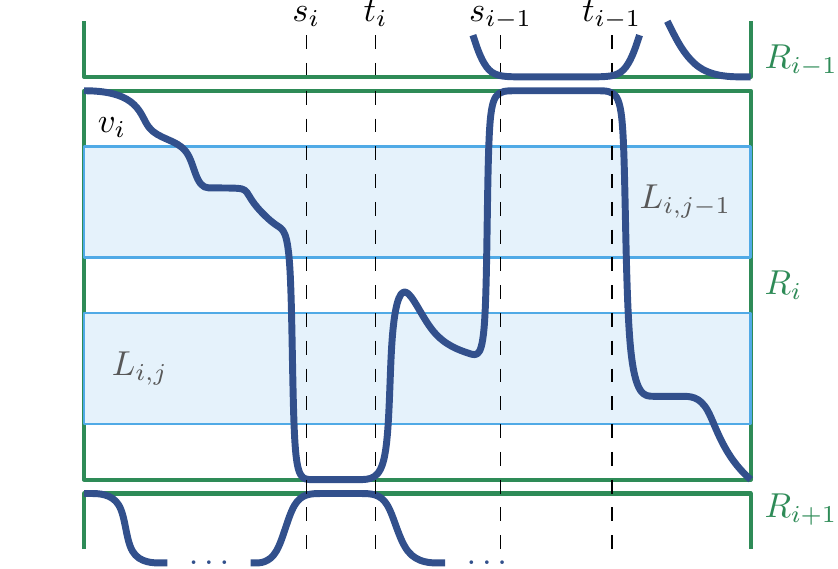}
\caption{The curve for $v_i$ before (left) and after (right) introducing detours.}
\label{figure:curve-detour}
\end{figure}

We first arbitrarily root $T$, and compute its heavy path decomposition.
Note that any root-leaf path of the event graph $T$ contains at most $\lceil \log n \rceil$ light edges~\cite{sleator-tarjan-1983}.  Let $P$ be the heavy path beginning at the root of $T$. We denote the node on $P$ at depth $i$ in $T$ by $v_i$. For each $v_i$, with $l_i$ light children, we first lay out each light subtree $L_{i,j}$ for $1 \leq j \leq l_i$.
We then order these layouts vertically in increasing order of meeting start time between $v_i$ and the root $r_{i,j}$ of $L_{i,j}$, separating each layout by vertical distance $\delta_{\mathrm{separate}}$. We denote the rectangle containing all layouts $L_{i,j}$ by $R_i$ (see Fig.~\ref{figure:curve-detour}). Then, we draw a single $x$-monotone curve from the top left to the bottom right of $R_i$, passing through the layout of each $L_{i,j}$, meeting the curve for each root $r_{i,j}$ at time $s_{i,j}$, and leaving at time $t_{i,j}$, for each event $(\{v_i, r_{i,j}\}, [s_{i,j},t_{i,j}])$.

Now for each $v_i$, we have a layout of $v_i$ and its light subtrees in a rectangle~$R_i$. We now show how to draw events between characters that are adjacent via a heavy edge in $P$. We first place all $R_i$ vertically in order along the path $P$ (from~$R_1$ to $R_{|P|}$), separated by distance $\delta_{\mathrm{group}}$.\martin{isn't this too close? it might create undesired meetings between top- and bottommost characters in two adjacent blocks.} We must have the curves meet for each event $(\{v_i, v_{i+1}\}, [s_i,t_i])$. We show how to introduce detours so that the curve $v_i$ joins curve $v_{i+1}$ at time $s_i$.
Let $n_i$ be the number of curves in the light subtrees of $v_i$.\martin{previously this was $l_i$ but that would be just the number of light children of $v_i$, not all characters in the light subtrees}
Before time $s_i$, curve $v_i$ has intersected some number $\gamma$ of the curves from its light subtrees, and has $n_i\!-\!\gamma$ curves still to intersect. Just before time $s_{i}$, we divert the curve so that it intersects the remaining $n_i\!-\!\gamma$ curves and reaches the bottom of rectangle $R_i$ to meet with $v_{i+1}$ at time $s_i$. Then at time $t_i$, we return the curve back to between curves $\gamma$ and $\gamma+1$ and allow the curve to continue as before, passing through the remaining $n_i\!-\!\gamma$ curves. For each $v_i$ we must also introduce a similar detour to the top of its rectangle $R_i$ so that it can meet the curve of $v_{i-1}$ at time $s_{i-1}$; see Fig.~\ref{figure:curve-detour}(right).

We introduce at most two such detours for each rectangle $R_i$, and therefore increase the number of crossings of each curve $v_i$ by a constant factor of at most five. \martin{is it obvious that $v_i$ does not cross any other curve more than 5 times? I think we can achieve it by making changes steep enough and always letting $v_i$ pause between two sublayouts and not within them. maybe this is something to say in a longer version...} Therefore, the total number of crossings $N(T)$ in our drawing of $T$ satisfies \ifFull the recurrence \fi
\ifFull
\[
N(T) \leq \sum_{i =1}^{|P|} \sum_{j=1}^{l_i} N(L_{i,j}) + 5n\,,
\]
\else
$N(T) \leq \sum_{i =1}^{|P|} \sum_{j=1}^{l_i} N(L_{i,j}) + 5n$,
\fi
with base case $N((\{v\},\emptyset)) = 0$. Since all $L_{i,j}$ are disjoint, each iteration of the recurrence contributes at most $O(n)$ crossings. Further, since there are $O(\log n)$ light edges on the simple path from the root to any leaf in the heavy path decomposition~\cite{sleator-tarjan-1983}, the recurrence reaches the base case after $O(\log n)$ iterations. Therefore, the recurrence solves to $N(T) = O(n \log n)$ crossings\ifFull, leading to the following theorem.\else .\fi

\begin{theorem}
Any pairwise single-meeting storyline with a tree event graph has a storyline visualization with $O(n\log n)$ crossings.
\end{theorem}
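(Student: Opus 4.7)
The plan is to give a recursive construction based on the heavy path decomposition of the rooted tree $T$ and to charge crossings to levels of light edges. I would first arbitrarily root $T$ and compute its heavy path decomposition, so that the simple path from the root to any leaf traverses at most $\lceil \log n \rceil$ light edges. The construction then proceeds along the heavy path $P = v_1, v_2, \dots$ starting at the root, recursively.

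The key steps I would carry out in order are: (1) recursively build a layout for each light subtree $L_{i,j}$ hanging off a node $v_i \in P$ inside its own axis-aligned rectangle; (2) stack these subtree-rectangles vertically inside a larger rectangle $R_i$ assigned to $v_i$, ordered by the start time of the meeting between $v_i$ and the root of $L_{i,j}$, and run an $x$-monotone curve for $v_i$ that threads through each block to realize the corresponding event; (3) stack the rectangles $R_1, R_2, \dots$ vertically along $P$ and realize each heavy edge $\{v_i, v_{i+1}\}$ by two short detours that temporarily route $v_i$ to the bottom boundary of $R_i$ (and $v_{i+1}$ analogously to the top of $R_{i+1}$) at the meeting times, then return them to their prior positions. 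Because each detour only traverses curves already inside $R_i$, a detour costs at most a constant times the number of curves in $R_i$.

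For the crossing count I would set up the recurrence
\[
N(T) \leq \sum_{i=1}^{|P|}\sum_{j=1}^{l_i} N(L_{i,j}) + cn,
\]
with base case $N(\text{single node}) = 0$, where the additive $cn$ term bounds the total number of new crossings incurred inside all rectangles $R_i$ on the current level (the $v_i$-curve crosses each curve in its light subtrees $O(1)$ times thanks to the detours). Since the light subtrees at any fixed recursion level are vertex-disjoint, their sizes sum to at most $n$, so one level contributes $O(n)$ crossings. The heavy-path-decomposition property then bounds the recursion depth by $O(\log n)$, giving $N(T) = O(n \log n)$.

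The step I expect to be the main obstacle is verifying that the detours really add only a constant number of crossings per curve and do not inadvertently create spurious meetings between non-adjacent characters. Concretely, I would need to ensure that: (a) the detour bringing $v_i$ to the bottom of $R_i$ and back crosses each curve in $R_i$ at most a constant number of times, which can be done by making the detours steep and placed strictly between consecutive sub-layouts so they never enter another $L_{i,j}$'s interior; and (b) the vertical gap between adjacent $R_i$'s is chosen to be at least $\delta_{\mathrm{separate}}$ so that no unintended grouping occurs across the heavy-path interface. Once these geometric details are pinned down, the recurrence and the $O(\log n)$ bound on light edges on any root-to-leaf path deliver the claimed $O(n \log n)$ upper bound.
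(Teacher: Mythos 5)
Your proposal follows essentially the same route as the paper: heavy path decomposition, recursive layout of light subtrees in stacked rectangles $R_i$, detours to realize heavy-edge meetings, and the recurrence $N(T) \leq \sum_i \sum_j N(L_{i,j}) + O(n)$ with $O(\log n)$ recursion depth. The geometric caveats you flag --- that detours must stay between sub-layouts to cost only $O(1)$ crossings per curve, and that adjacent rectangles need separation at least $\delta_{\mathrm{separate}}$ to avoid spurious meetings --- are exactly the details the paper glosses over, so your treatment is, if anything, slightly more careful.
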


\subsection{A Lower Bound}


\begin{wrapfigure}[8]{r}{.32\textwidth}
	\centering
	\vspace{-10ex}
	\includegraphics[scale=0.9]{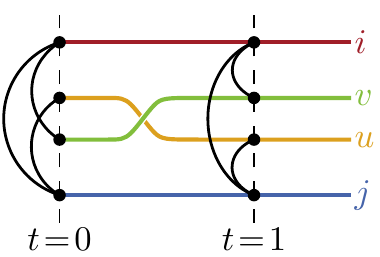}
	\caption{Event graph on line $t=0$ before and on line $t=1$ after swapping $u$ and $v$.}
	\label{figure:swaps}	
\end{wrapfigure}
Consider some storyline visualization $\mathcal{V}$ with an event graph $G$ with $n$ nodes and $m$ edges. Let $\pi_0$ be the ordering of the characters along a vertical line $t=0$ in~$\mathcal{V}$. Assign labels $[1,\dots,n]$ to the characters according to $\pi_0$. Then permutation $\pi_0$ defines an embedding of $G$ on the line $t=0$. As time progresses and character curves intersect, the corresponding vertices in the embedding of $G$ are swapped, see Fig.~\ref{figure:swaps}.

For every edge $e=(i,j)\in G$ define its cost $c_t(e)$ to be the number of characters between $i$ and $j$ on the vertical line at any given time $t$.

Then initially $c_0(e)=|i-j|-1$. 
So before $i$ and $j$ can meet, their curves must cross at least $|i-j|-1$ curves that were initially between
\ifFull
them. Two characters can meet in a storyline without crossings if there are no characters between them blocking their meeting, then corresponding edge has cost $0$.
\else
them, which may be $0$.
\fi

When two character curves cross,
\ifFull(whether adjacent in $G$ or not)\fi their corresponding vertices $u$ and $v$ swap in the embedding of $G$ on the vertical line. Notice that, after the swap, the costs of edges incident to $u$ or $v$ change by $\pm 1$, and there is no change for non-incident edges.
\ifFull If an edge is not adjacent to $u$ nor to $v$ then its cost does not change. \fi
Thus, the crossing changes the cost of at most $\mathrm{deg}(u)+\mathrm{deg}(v)$ edges in $G$.

Let $C_0=\sum{c_0(e)}$ be the total initial cost of the edges of $G$ embedded on the line $t=0$. Then $C_0$ is the number of decrements in edge costs needed before all edges would have had cost $0$ at some moment in time. Every crossing of character curves $u$ and $v$ in $\mathcal{V}$ decreases this cost by at most $\mathrm{deg}(u)+\mathrm{deg}(v)$. Therefore, there are at least $\frac{\min_{\pi_0}C_0}{2\Delta}$ crossings in any storyline visualization $\mathcal{V}$ with an event graph $G$, where $\Delta$ is the maximum degree of $G$. Notice that $\min_{\pi_0}C_0=L^{*}-m$, where $L^{*}$ is the total edge length in the \emph{optimal linear ordering} of graph $G$ (the numbering of its vertices that minimizes the sum of differences of numbers over the graph's edges; see~\cite{linear-ordering} and~\cite[Problem GT42]{garey-johnson}).

\begin{theorem}
Any storyline visualization with an event graph $G$ requires $\Omega(\frac{L^{*}-m}{2\Delta})$ crossings, where $L^{*}$ is the total edge length of the optimal linear ordering of $G$, and $\Delta$ is the maximum degree of $G$.
\end{theorem}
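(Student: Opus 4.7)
The plan is to formalize the potential-function argument already sketched in the paragraphs preceding the statement. Given any storyline visualization $\mathcal{V}$ with event graph $G=(V,E)$, I would fix its initial vertical order $\pi_0$ at $t=0$, label the characters $1,\dots,n$ accordingly, and retain the edge cost $c_t(e)$ defined as the number of characters strictly between the two endpoints of $e$ on the vertical sweep line at time $t$. For $e=(i,j)$ this gives $c_0(e)=|i-j|-1$, while at the meeting time $t^{*}(e)$ of $i$ and $j$ we must have $c_{t^{*}(e)}(e)=0$.

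The heart of the proof is a per-edge bound. Let $D_e$ denote the number of crossings in $\mathcal{V}$ incident to at least one endpoint of $e$. Only such crossings can change $c_t(e)$, and each changes it by exactly $\pm 1$ (a quick local check at a single adjacent swap). Since $c_t(e)$ must eventually drop from $c_0(e)$ to $0$ via unit steps, I would conclude $D_e\geq c_0(e)$; upward excursions are allowed but must be paid back by additional downward steps. Summing over edges yields $\sum_e D_e \geq C_0$, where $C_0:=\sum_e c_0(e)$. Conversely, each actual crossing of curves $u,v$ in $\mathcal{V}$ is counted in $D_e$ for at most $\deg(u)+\deg(v)\leq 2\Delta$ edges, so $\sum_e D_e\leq 2\Delta\,X(\mathcal{V})$, where $X(\mathcal{V})$ is the total number of crossings. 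Combining gives $X(\mathcal{V})\geq C_0/(2\Delta)$.

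Finally, I would optimize over the initial permutation: since the bound holds for the $\pi_0$ induced by $\mathcal{V}$ and $\mathcal{V}$ was arbitrary, $X(\mathcal{V})\geq \min_{\pi_0} C_0/(2\Delta)$. Noting that $\sum_e c_0(e)=\sum_{(i,j)\in E}(|i-j|-1)=L(\pi_0)-m$, the minimum is $\min_{\pi_0}C_0=L^{*}-m$ by definition of the optimal linear arrangement of $G$, giving $X(\mathcal{V})=\Omega((L^{*}-m)/(2\Delta))$.

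The main obstacle is conceptual rather than computational: selecting the right potential and recognising that only crossings touching an endpoint of $e$ can alter $c_t(e)$. The only place that calls for slight care is the $D_e\geq c_0(e)$ step, where one must argue that upward excursions of $c_t(e)$ do not invalidate the count; this is a standard monovariant observation. Once the potential is fixed, everything else is bookkeeping, and the resulting bound is exactly the one claimed.
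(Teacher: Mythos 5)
Your proposal is correct and follows essentially the same potential-function argument as the paper: the paper likewise defines $c_t(e)$, observes that each crossing changes the cost of at most $\deg(u)+\deg(v)\leq 2\Delta$ edges by $\pm 1$, and divides the total required decrease $\min_{\pi_0}C_0=L^{*}-m$ by $2\Delta$. Your per-edge double-counting via $D_e$ is just a slightly more explicit rendering of the paper's aggregate accounting, so there is nothing substantive to add.
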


\begin{corollary}
There exists a pairwise single-meeting storyline with a tree event graph whose storyline visualization requires $\Omega(n\log n)$ crossings.
\end{corollary}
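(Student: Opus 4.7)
The plan is to apply the preceding theorem to a bounded-degree tree event graph with large optimal linear arrangement cost. Since a tree has $m = n-1$ edges, the theorem gives $\Omega((L^{*}-m)/(2\Delta))$ crossings, so it suffices to exhibit a tree $T$ on $n$ vertices with $\Delta = O(1)$ and $L^{*}(T) = \Omega(n\log n)$. Any such $T$ yields a valid pairwise single-meeting storyline by scheduling its $n-1$ edges in pairwise disjoint time intervals, automatically ensuring that every event has exactly two characters and no character is in two overlapping events.

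A natural candidate is the complete binary tree $B_d$ on $n = 2^{d+1}-1$ vertices, whose maximum degree is $\Delta \leq 3$. The upper bound $L^{*}(B_d) = O(n\log n)$ is realized by the in-order arrangement, in which every internal node at depth $k$ sits at the midpoint of its subtree's range; each edge from depth $k$ to depth $k+1$ then has length $2^{d-k-1}$, and summing the $2^{k+1}$ such edges over all levels gives $\sum_{k=0}^{d-1} 2^{k+1} \cdot 2^{d-k-1} = d \cdot 2^d = \Theta(n\log n)$. For the matching lower bound I would argue by induction on $d$. Given any linear arrangement of $B_d$, restricting it to each depth-$(d-1)$ subtree and compressing positions yields a valid arrangement of $B_{d-1}$, so the two subtrees together contribute at least $2\,L^{*}(B_{d-1})$. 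An additional $\Omega(n)$ is then paid at the top level, either as length on the two root edges (when the two subtree position-sets lie in separated intervals) or as extra intra-subtree length (when they interleave, since placing a subtree root away from the midpoint of its own range strictly inflates the subtree's contribution). This produces the recurrence $L^{*}(B_d) \geq 2\,L^{*}(B_{d-1}) + \Omega(n)$, which unrolls to $\Omega(n\log n)$.

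The main obstacle is making the interleaving step rigorous: proving that any arrangement in which the position-sets of the two halves heavily overlap forces the two restricted subtree arrangements to collectively pay $\Omega(n)$ beyond $2\,L^{*}(B_{d-1})$. A naive bisection-width bound is far too weak here, since a tree has bisection width $1$ and would only yield $\Omega(n)$ total crossings, losing the crucial logarithmic factor. A clean proof instead relies on a charging argument that distributes the excess across the $2^k$ ``macroscopic'' cuts associated with each level of the tree, as in the classical MLA analysis of complete binary trees. Once $L^{*}(T) = \Omega(n\log n)$ is in hand, plugging into the preceding theorem with $\Delta \leq 3$ and $m = n - 1$ delivers the claimed $\Omega(n\log n)$ lower bound on the number of crossings.
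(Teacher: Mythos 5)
Your overall strategy coincides with the paper's: take the full binary tree, note $\Delta\le 3$ and $m=n-1$, and feed the theorem a lower bound $L^{*}=\Omega(n\log n)$ on the optimal linear arrangement. The paper, however, obtains that bound by citation---Chung~\cite{chung-1978} (see also~\cite{tree-valuation}) showed that every assignment of labels $1,\dots,n$ to the full binary tree has total edge-length $\Omega(n\log n)$---whereas you attempt to prove it from scratch, and that is where your argument has a genuine gap. The dichotomy you use to extract the additive $\Omega(n)$ in the recurrence $L^{*}(B_d)\ge 2L^{*}(B_{d-1})+\Omega(n)$ fails in the ``separated'' branch: the two root edges need \emph{not} carry $\Omega(n)$ of length there. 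Place the left subtree on positions $1,\dots,N$ with its root at position $N$, the overall root at position $N+1$, and the right subtree on positions $N+2,\dots,2N+1$ with its root at position $N+2$; both root edges then have length $1$. In that configuration the $\Omega(n)$ excess must instead be charged to the fact that each subtree root is forced to an extreme position of its interval, which inflates the subtree's cost above the unconstrained optimum $L^{*}(B_{d-1})$---but establishing that requires a strengthened induction hypothesis tracking the cost of an arrangement as a function of where the root is allowed to sit, i.e., exactly the kind of argument you defer. The interleaved branch you explicitly concede is not rigorous. So the one quantitatively essential step of the corollary is unproven in your write-up.

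Everything surrounding that step is sound and matches the paper: the observation that a bounded-degree tree yields a valid pairwise single-meeting storyline by scheduling its edges in disjoint time intervals, the choice $\Delta\le 3$ and $m=n-1$, the $O(n\log n)$ upper bound via the in-order layout, and the final arithmetic $\Omega\bigl((\Omega(n\log n)-n+1)/6\bigr)=\Omega(n\log n)$. The shortest repair is to replace your inductive sketch by the citation the paper uses; a self-contained proof is possible but needs the two-parameter induction (or a level-by-level charging over the $n-1$ cuts of the arrangement) rather than the dichotomy as you stated it.
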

\begin{proof}
Let $G$ be a full binary tree. Chung~\cite{chung-1978} showed that for any assignment of unique labels $[1,\ldots,n]$ to vertices of a full binary tree, the sum of label differences $|i-j|$ over all edges $(i,j)\in G$ is $\Omega(n\log n)$ (see also~\cite{tree-valuation}). 
Therefore, there will be $\Omega(\frac{\Omega(n\log n)-n+1}{2\times 3})=\Omega(n\log n)$ crossings.
\end{proof}

\section{An FPT Algorithm for the Storyline Problem}
We now consider general storylines, where any number of characters may participate in an event, and we have no restrictions on the meeting (hyper)-graph structure. The general storyline problem is NP-complete, by a straightforward reduction from  {\sc Bipartite Crossing Number}~\cite{gj83}.
However, in real-world storylines, there may be only a few characters of interest and these characters participate frequently in events. We therefore are interested in a parameterized algorithm to better capture the complexity in this scenario\ifFull~\cite{parameterized-complexity}\fi. Let $k = |C|$ be the number of characters in a storyline, and let $m = |\mathcal{E}|$ be the number of events. We show that the storyline problem is \emph{fixed-parameter tractable} when parameterized on~$k$. A problem is said to be fixed-parameter tractable if it can be solved in time $f(k)m^{O(1)}$, where $f$ is some function of $k$ that is independent of $m$.

\begin{theorem}
For storylines with $k$ characters and $m$ events, we can \ifFull compute a storyline visualization with a minimum number of crossings\else solve the storyline problem \fi in time $O(k!^2k\log k + k!^2m)$.
\end{theorem}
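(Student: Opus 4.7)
The plan is to exploit that at every vertical cross-section the ordering of characters is one of only $k!$ permutations, and to run a layered dynamic program whose states are permutations and whose $m$ layers correspond to the events. The crucial ingredient is the fact that the minimum number of pairwise crossings of $x$-monotone curves needed to transform one vertical ordering $\pi$ into another $\pi'$ equals the Kendall tau distance $d(\pi,\pi')$ between them, i.e.\ the number of pairs of characters whose relative order differs. I would first enumerate all permutations in $S_k$ and compute $d(\pi,\pi')$ for every ordered pair in $O(k\log k)$ time using merge-sort-based inversion counting, giving an $O(k!^2\,k\log k)$ precomputation that yields a lookup table.

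Next I would order events by start time. For each event $E_i$, let $V_i\subseteq S_k$ be the set of permutations in which $C_i$, together with the character set of every other event concurrently active at time $s_i$, each form a contiguous block. I would argue that the minimum crossing number over all visualizations equals the minimum, over choices of $\pi_i\in V_i$ for $i=1,\dots,m$, of $\sum_{i=1}^{m-1}d(\pi_i,\pi_{i+1})$: inside each event the vertical ordering can be kept constant, since any crossing scheduled within an event can be shifted out of it without violating the block constraints; and between two consecutive events the required $d(\pi_i,\pi_{i+1})$ crossings can always be realized by scheduling adjacent transpositions at distinct $x$-coordinates. Defining $\mathrm{dp}[i][\pi]$ as the minimum cost of a valid assignment with $\pi_i=\pi$, the recurrence $\mathrm{dp}[i][\pi]=\min_{\pi'\in V_{i-1}}\mathrm{dp}[i-1][\pi']+d(\pi',\pi)$ has $O(k!^2)$ transitions per stage and $O(1)$ work per transition (thanks to the precomputed table), yielding $O(k!^2\,m)$ for the DP and $O(k!^2\,k\log k+k!^2\,m)$ in total.

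The main obstacle will be rigorously justifying the reduction from continuous crossing-minimization to this discrete Kendall-tau minimization. One direction is direct: any visualization induces a sequence of event-time permutations $\pi_i\in V_i$, and because two $x$-monotone curves can only cross when adjacent in the current vertical order, each crossing flips exactly one inversion of the instantaneous permutation, so the total crossing count is at least $\sum d(\pi_i,\pi_{i+1})$. For the converse I would build a drawing by realizing the inversions between consecutive events as adjacent transpositions at distinct $x$-coordinates in the gap, and argue that this respects all block constraints throughout. A secondary subtlety is handling events that overlap in time: since overlapping events have disjoint character sets, their block constraints can be imposed simultaneously by baking them all into $V_i$, and the DP structure is unaffected.
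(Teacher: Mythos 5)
Your proposal is correct and follows essentially the same route as the paper: a layered graph/dynamic program whose level-$i$ vertices are the permutations consistent with the events active at start time $s_i$, with transition costs equal to inversion counts precomputed by merge sort in $O(k!^2k\log k)$ total and a shortest path found level-by-level in $O(k!^2m)$. Your additional care in justifying the equivalence between crossing minimization and Kendall-tau minimization (and in handling temporally overlapping events) goes beyond what the paper spells out, but it is the same argument.
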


\begin{proof}
We show how to reduce the storyline problem to finding shortest path in a graph. For each time interval $[s_i,t_i]$ in the storyline we take its start time $s_i$ and create a vertex for each of the $O(k!)$ possible vertical orderings of the curves that satisfy the event groupings at $s_i$. We denote the vertices for time $s_i$ by $v_{i,j}$, where $1 \leq j \leq k!$, and say these vertices are on \emph{level} $i$.

Denote the minimum number of crossings to transform one ordering $v_{i,j}$  at level $i$ to ordering $v_{i+1,l}$ at level $i+1$ by $I(v_{i,j},v_{i+1,l})$. For all levels, we connect each vertex $v_{i,j}$ to each vertex $v_{i+1,l}$ by a directed edge with weight $I(v_{i,j},v_{i+1,l})$.
\ifFull
We further create a source vertex $s$ with directed edges of weight $0$ to each vertex on level $1$, and a sink vertex $t$ with directed edges of weight $0$ from each vertex on level $m$.
\else
We then create source and terminal vertices $s$ and $t$ and connect them with edges of weight $0$ to vertices on levels $1$ and $m$, respectively.
\fi
Then the weight of a shortest path from $s$ to $t$ is the minimum number of crossings in any embedding, and this path specifies the vertical orderings of the curves at each time step~$s_i$.

We now compute the number of crossings to transform between vertical orderings. First note that we can compute the minimum number of swaps between two vertical orderings of size $k$ in time $O(k\log k)$
\ifFull using merge sort--we set the sorted order to be the final ordering and count the inversions removed at each iteration of the merge sort algorithm.
\else
by counting inversions with merge sort.
\fi
Thus, we can precompute the weights between all pairs of orderings in time $O(k!^2k\log k)$, and assign edge weights when building the graph at a cost of $O(k!^2)$ per level.

Now a minimum-weight path from $s$ to $t$ fully specifies a storyline visualization. We can lay out each curve by the vertical ordering specified by each vertex on the path with its time step, swapping curve order between time steps. Then during each event we group the curves together, otherwise we separate them.

In total there are $m$ levels, each with $O(k!)$ vertices and $O(k!^2)$ edges. Thus, there are $O(k!m)$ vertices and $O(k!^2m)$ edges. We can compute a shortest path from $s$ to $t$ in time linear in the number of vertices and edges, by dynamic programming: For each level $i$, we compute the minimum weight for each vertex $v$ by iterating over all incoming edges from vertices on level $i-1$ and choosing the one that minimizes the total weight to $v$. Thus we can compute a shortest path from $s$ to $t$ in time $O(k!^2m)$. Including the time to precompute edge weights, we get total time $O(k!^2k\log k) + O(k!^2m) = O(k!^2k\log k + k!^2m)$.
\end{proof}


\subsubsection{Acknowledgments.}We thank the anonymous referees for their helpful comments. This research was initiated at the 2nd International Workshop on Drawing Algorithms for Networks in Changing Environments (DANCE 2015) in Langbroek, the Netherlands, supported by the Netherlands Organisation for Scientific Research (NWO) under project no. 639.023.208. IK is supported in part by the NWO under project no. 639.023.208. VP is supported by grant 2014-03476 from the Sweden's innovation agency VINNOVA.


\end{document}